\documentclass[sigconf,screen]{acmart}

\AtBeginDocument{
  }

\setcopyright{cc}
\setcctype{by}
\acmDOI{10.1145/3832783.3834624}
\acmYear{2026}
\copyrightyear{2026}
\acmISBN{979-8-4007-2882-2/2026/10}
\acmConference[ASE '26]{Proceedings of the 41st IEEE/ACM International Conference on Automated Software Engineering}{October 12--16, 2026}{Munich, Germany}
\acmBooktitle{Proceedings of the 41st IEEE/ACM International Conference on Automated Software Engineering (ASE '26), October 12--16, 2026, Munich, Germany}
\acmSubmissionID{ase26tool-p77-p}
\received{2026-05-11}
\received[accepted]{2026-06-19}

\input{marco}

\begin{document}

\title{MIMOS-Tools: An Integrated Toolchain for
Model-Based Design of Embedded Systems}

\author{Wang Yi}
\correspondingauthor
\orcid{0000-0002-2994-6110}
\affiliation{%
  \institution{Uppsala University}
  \department{Department of Information Technology}
  \city{Uppsala}
  \country{Sweden}
}
\email{wang.yi@it.uu.se}

\author{Chengzi Huang}
\orcid{0009-0001-6326-2636}
\affiliation{%
  \institution{Uppsala University}
  \department{Department of Information Technology}
  \city{Uppsala}
  \country{Sweden}
}
\email{chengzi.huang@it.uu.se}

\author{Behnam Khodabandeloo}
\orcid{0009-0004-6118-4429}
\affiliation{%
  \institution{Uppsala University}
  \department{Department of Information Technology}
  \city{Uppsala}
  \country{Sweden}
}
\email{behnam.khodabandeloo@it.uu.se}

\author{Duc Anh Nguyen}
\orcid{0009-0006-0899-7513}
\affiliation{%
  \institution{Uppsala University}
  \department{Department of Information Technology}
  \city{Uppsala}
  \country{Sweden}
}
\email{ducanh.nguyen@it.uu.se}

\renewcommand{\shortauthors}{Yi, Huang, Khodabandeloo, and Nguyen}

\begin{abstract}
Despite decades of progress in model-based methods and tools for embedded real-time systems, there is still no integrated environment that supports the entire development lifecycle. This paper presents MIMOS-Tools, a comprehensive toolchain spanning the key phases of development, including modeling, simulation, formal verification, real-time scheduling, schedulability and end-to-end latency analysis, and code generation for heterogeneous multicore platforms. 
All phases are grounded in a single, coherent, and deterministic model of computation, the MIMOS model, for which we present a formal operational semantics and establish both functional and timing determinism.
This semantic foundation ensures that properties verified at the model level are preserved throughout the development process and in the generated implementation.

A distinguishing feature of MIMOS-Tools is its support for not only deterministic but also modular, compositional design. Unlike synchronous design environments such as SCADE and Simulink, which rely on a global base clock and offline static scheduling, MIMOS-Tools supports time-triggered asynchronous communication together with flexible real-time scheduling of computation. This removes the tight global coupling of the synchronous paradigm: components can be added, modified, or updated without recomputing a system-wide schedule, enabling scalable and incremental design and dynamic updates throughout the system lifecycle. However, MIMOS-Tools also supports Simulink models when the deadlines of tasks are set to zero to emulate the zero-time semantics of synchronous design.

The toolchain is available at \url{https://uu-mimos.github.io}. 
\end{abstract}

\begin{CCSXML}
<ccs2012>
   <concept>
       <concept_id>10011007.10010940.10010971.10010980.10010984</concept_id>
       <concept_desc>Software and its engineering~Model-driven software engineering</concept_desc>
       <concept_significance>500</concept_significance>
       </concept>
   <concept>
       <concept_id>10011007.10011006.10011066.10011069</concept_id>
       <concept_desc>Software and its engineering~Integrated and visual development environments</concept_desc>
       <concept_significance>500</concept_significance>
       </concept>
   <concept>
       <concept_id>10011007.10010940.10010971.10010564</concept_id>
       <concept_desc>Software and its engineering~Embedded software</concept_desc>
       <concept_significance>500</concept_significance>
       </concept>
 </ccs2012>
\end{CCSXML}

\ccsdesc[500]{Software and its engineering~Model-driven software engineering}
\ccsdesc[500]{Software and its engineering~Integrated and visual development environments}
\ccsdesc[500]{Software and its engineering~Embedded software}

\keywords{Embedded Systems, Dataflow Networks, Modeling, Simulation, Verification, Real-Time Scheduling, Code Generation}

\maketitle

\section{Introduction}
\label{sec:Introduction}

Over the past decades, many theories, methods, and tools have been developed for the design, analysis, and deployment of embedded real-time systems. 
Major advances include modeling and simulation environments \cite{mathworks2026simulink, fritzson1998modelica, ansys2026scade}, formal verification techniques \cite{larsen1997uppaal}, 
real-time scheduling and analysis \cite{baruah2015multiprocessor},
time-triggered architectures \cite{kopetz2003time,henzinger2003giotto}, 
and synchronous programming languages and industrial design tools \cite{halbwachs1991lustre, ansys2026scade, mathworks2026simulink}.
Yet no single environment covers the full development lifecycle and later system evolution.
Existing solutions usually address
only specific development stages and focus on particular
classes of properties. For instance, formal verification often
relies on abstractions that omit implementation details,
creating a semantic gap between verified models and deployed
systems. Conversely, real-time scheduling theory concentrates
on non-functional aspects such as workload characterization and
resource allocation while largely abstracting from functional
behavior. As a consequence, engineers have to manually integrate
heterogeneous models and tools across multiple abstraction
levels, a process that is error-prone and difficult to certify
in safety-critical contexts.

This paper presents MIMOS-Tools, an integrated toolchain that closes this  gap by grounding the entire development workflow in a single deterministic model of computation: the MIMOS model~\cite{yi2022mimos}, inspired by Kahn  Process Networks~\cite{kahn1974semantics} and augmented with explicit timing constraints in the spirit of Logical Execution Time~\cite{henzinger2003giotto}. 
For this model, we present a formal operational semantics and a determinism theorem, which serve as the single semantic reference for every stage of the toolchain.
Unlike synchronous design tools such as SCADE and Simulink~\cite{ansys2026scade, mathworks2026simulink}, which rely on a global base clock and offline static schedules, MIMOS-Tools eliminates the tight global coupling of the synchronous
paradigm: adding a task with a new period does not force
recomputation of a system-wide schedule, enabling truly modular
and incremental design on modern multicore and heterogeneous
platforms.

The toolchain covers four key capabilities.
\textit{Graphical modeling} provides a visual, hierarchical environment with mixed-language support and LLM-assisted programming for rapid prototyping and iterative design.
\textit{Simulation and verification} offers fast functional analysis and validation through simulation, complemented by formal verification to guarantee correctness properties.
\textit{Real-time scheduling and analysis} maps design models to real-time tasks and supports multicore scheduling, schedulability and end-to-end latency analysis.
\textit{Target code generation} performs a semantics-preserv\-ing translation from MIMOS models to executable multicore code whose execution retains the functional and timing properties verified in the earlier stages.

\section{The MIMOS Model}
\label{sec:model_architecture}

MIMOS~\cite{yi2022mimos} is a formalism for modeling, implementing, and updating embedded real-time systems, conceptually a \emph{multi-rate task graph} model.
In this section we present its operational semantics, which serves as the formal foundation shared by all features of MIMOS-Tools.
We begin with an informal overview, followed by the transition rules and a determinism theorem guaranteeing that systems specified in MIMOS behave deterministically in both functionality and timing.

\subsection{Informal Semantics}
\label{sec:model}
In MIMOS, a system is modeled as a directed graph where nodes represent computation \emph{tasks} and edges represent communication \emph{channels}. Tasks communicate exclusively via channels by exchanging \emph{tokens} (values of specific data types).

Each task may have multiple input and output channels, and has a \emph{task function} that defines a mapping from input tokens to output tokens. 
Tasks are periodic\footnote{The tasks can be described by any well-established real-time task model with deterministic release patterns.}, with a \textit{period} between consecutive releases and a \textit{relative deadline}.
A worst-case execution time (WCET) is also given for scheduling analysis and code generation; it does not affect the model's timing behavior, which depends only on periods and deadlines under the assumption that tasks are schedulable.

Channels are \textit{single-writer and single-reader}.
One task writes to each channel and one task reads from it\footnote{A channel may be copied for multiple readers without changing the model.}.
Each channel buffers tokens and is either a \textit{FIFO} or a \textit{register}.
A FIFO channel is an unbounded order-preserving buffer initialized with any finite number of tokens.
A write appends one token to the tail, and a read removes and returns one from the head.
A register channel holds one token and is always initialized.
A write overwrites the token, and a read returns it without removing it.

Every activation follows a \emph{read--compute--write} pattern.
At its release instant, a task is \emph{enabled} if and only if each FIFO input holds at least one token (registers are always available).
Otherwise the activation is \emph{skipped}.
In this case no tokens are consumed, no outputs are produced, and the task's inputs are not checked again until the next release instant.
When enabled, the task consumes its inputs at release, executes at any point chosen by the scheduler before the deadline, and at the deadline produces its outputs.
Both reads and writes are logically instantaneous.
They take zero time and occur at the release and deadline instants.
When reads and writes occur at the same instant (possibly by different tasks), all writes are performed before any reads (the \textit{write-before-read} rule).
This abstraction ensures that reads and writes happen at predefined time instants given by the periods and deadlines, so their order does not depend on how the computations are scheduled.

\begin{figure}[!htbp]
  \centering
  \begin{subfigure}[b]{0.46\textwidth}
    \centering
  \begingroup
  \resizebox{0.57\linewidth}{!}{
\begin{tikzpicture}[>=Latex, font=\footnotesize]
  \tikzset{
    task/.style={draw, rounded corners=2mm, fill=blue!10,
                 minimum width=1.7cm, minimum height=1.3cm, line width=0.5pt},
    token/.style={draw=orange!85!black, fill=white, circle, thick, inner sep=1.6pt}
  }

  \node[task] (t1) {$\tau_1$};
  \node[task, right=2.3cm of t1] (t2) {$\tau_2$};

  \node[above=0.8mm of t1, align=center, font=\scriptsize]
       {Period: 5\\Deadline: 3};
  \node[above=0.8mm of t2, align=center, font=\scriptsize]
       {Period: 6\\Deadline: 4};

  \draw[->, semithick, draw=gray!70!black] (t1.east) -- (t2.west);
  \node[above=0.3mm, font=\scriptsize] at ($(t1.east)!0.10!(t2.west)$) {};
  \node[above=0.3mm, font=\scriptsize] at ($(t1.east)!0.90!(t2.west)$) {};
  \coordinate (mid) at ($(t1.east)!0.5!(t2.west)$);
  \node[below=0.4mm, font=\scriptsize] at (mid) {$c_1$ (FIFO)};

  \coordinate (L) at ($(t1.west)+(-6mm,0)$);
  \coordinate (R) at ($(t2.east)+(5mm,0)$);
  \coordinate (B) at ($(t1.south)!0.5!(t2.south) + (0,-6mm)$);

  \draw[semithick, draw=gray!70!black, ->]
        (t2.east) -- (R) -- (R |- B) -- (L |- B) -- (L) -- (t1.west);

  \node[font=\scriptsize] at ($(L)+(4mm,2.2mm)$) {};
  \node[font=\scriptsize] at ($(R)+(-4mm,2.2mm)$) {};

  \node[token] at ($(L|-B)!0.5!(R|-B)$) (tk2) {};
  \node[font=\scriptsize, below=0.4mm of tk2] {$c_2$ (FIFO)};
\end{tikzpicture}}%
  \endgroup

  \end{subfigure}
  \hfill
  \begin{subfigure}[b]{0.46\textwidth}
    \centering
  \begingroup
  \resizebox{0.65\linewidth}{!}{
\begin{tikzpicture}[
  x=0.45cm, y=0.78cm, >=Stealth, font=\footnotesize,
  axis/.style={->, line width=0.55pt, gray!75!black},
  rel/.style={gray!70!black, line width=0.7pt, line cap=round,
              -{Stealth[length=3pt,width=3pt]}},
  ddl/.style={gray!70!black, line width=0.7pt, line cap=round,
              -{Stealth[length=3pt,width=3pt]}},
  job1/.style={fill=blue!12, draw=blue!55!black, line width=0.5pt},
  job2/.style={fill=yellow!28, draw=yellow!72!black, line width=0.5pt},
  transfer/.style={densely dashed, line width=0.65pt, gray!70!black,
                   -{Stealth[length=3pt]}, shorten >=1pt, shorten <=1pt}
]

\def\yA{0}        \def\yB{-2.0}
\def\h{0.50}                 
\def\arrTop{1.00}            
\def\dTail{0.5}              

\node[font=\footnotesize] at (-0.9, \yA+0.50) {$\tau_1$};
\node[font=\footnotesize] at (-0.9, \yB+0.50) {$\tau_2$};

\foreach \a/\b in {1/2.5, 10.5/12}{ \draw[job1] (\a,\yA) rectangle (\b,\yA+\h); }
\foreach \a/\b in {6.8/7.8, 8.4/9.4}{ \draw[job2] (\a,\yB) rectangle (\b,\yB+\h); }

\draw[axis] (0,\yA) -- (14.4,\yA) node[right, black, font=\scriptsize] {time};
\draw[axis] (0,\yB) -- (14.4,\yB) node[right, black, font=\scriptsize] {time};

\foreach \t in {0,5,10}{ \draw[rel] (\t,\yA) -- (\t,\yA+\arrTop); }
\foreach \t in {3,8,13}{ \draw[ddl] (\t,\yA+\dTail) -- (\t,\yA); }
\foreach \t in {0,3,5,8,10,13}{ \node[below=1pt, font=\scriptsize] at (\t,\yA) {\t}; }

\foreach \t in {0,6,12}{ \draw[rel] (\t,\yB) -- (\t,\yB+\arrTop); }
\foreach \t in {4, 10}{ \draw[ddl] (\t,\yB+\dTail) -- (\t,\yB); }
\foreach \t in {0,4,6,10,12}{ \node[below=1pt, font=\scriptsize] at (\t,\yB) {\t}; }

\draw[transfer] (3,\yA+\h/2) to[out=-50,in=160] (6,\yB+\h);
\draw[transfer] (10,\yB+\h/2) to[out=50,in=-50] (10,\yA+0.18);

\node[font=\scriptsize] at (14,\yA+0.28) {$\cdots$};
\node[font=\scriptsize] at (14,\yB+0.28) {$\cdots$};
\end{tikzpicture}}%
  \endgroup

  \end{subfigure}
  \caption{A two-task MIMOS model with FIFOs $c_1$ (initially empty) and $c_2$ (one initial token), and its execution timeline. Releases: up-arrows; deadlines: shorter down-arrows; computations: colored blocks; communication: dashed arrows.
}
  \Description{Top: a two-task MIMOS model. Bottom: a time-line showing releases, deadlines, computation blocks, and data transfers between the tasks.}
  \label{fig:mimos_example}
\end{figure}

\myparagraph{Example} Figure~\ref{fig:mimos_example} shows tasks $\tau_1$ and $\tau_2$ connected by FIFOs $c_1$ (initially empty) and $c_2$ (one initial token).
At time~0 both tasks are released.
$\tau_1$ reads the token of $c_2$ and writes $c_1$ at its deadline, time~3.
$\tau_2$ skips since $c_1$ is empty.
At time~5, $\tau_1$ skips since $c_2$ is now empty.
At time~6, $\tau_2$ reads the token from $c_1$ and writes $c_2$ at its deadline, time~10.
At that moment $\tau_1$ is also released, and by the write-before-read rule it reads the new token.

\subsection{Formal Semantics}
\label{sec:semantics}

We now formally define the behavior of MIMOS models as a labeled transition system and establish the determinism property stated informally above.

\myparagraph{System Model and Configuration}
A \textbf{System Model} in MIMOS consists of a finite set of tasks $\setoftasks$ and a finite set of channels $\setofchannels$, forming a directed graph where nodes are tasks, edges are channels, and each channel is typed as either a FIFO or a register. Tasks are ranged over by $\tau$ and channels by $c$.
Each task $\tau \in \setoftasks$ is defined by a tuple $(\period(\tau), \deadline(\tau), \inputchannels(\tau), \outputchannels(\tau), \taskfunction(\tau))
$,
where $\period(\tau), \deadline(\tau) \in \mathbb{N}_{>0}$ stand for the period and relative deadline\footnote{Both are positive, as naturally required for feasibility. However, the deadline can be set to zero to emulate the zero-time semantics of synchronous models supported by the current implementation of MIMOS-Tools.
}, $\inputchannels(\tau), \outputchannels(\tau) \subseteq \setofchannels$ denote the input and output channels, and $\taskfunction(\tau)$ specifies the task function.
We assume that for any two distinct tasks $\tau, \tau'$, $\inputchannels(\tau) \cap \inputchannels(\tau') = \emptyset$ and $\outputchannels(\tau) \cap \outputchannels(\tau') = \emptyset$, and that $\bigcup_{\tau \in \setoftasks}\inputchannels(\tau)$ = $\bigcup_{\tau \in \setoftasks}\outputchannels(\tau) = \setofchannels$,
i.e., each channel has exactly one reader and exactly one writer.

Let $\setofdatavalues$ denote the domain of token values, where $\setofdatavalues^*$ is the set of all finite sequences over $\setofdatavalues$ and $\varepsilon \in \setofdatavalues^*$ represents the empty sequence.
Since each task may have multiple input and output channels, we let $\outvectorofdatavalues$ range over vectors of token values, with one component per channel.
A scheduled write operation is a tuple $(\tau, d, \outvectorofdatavalues)$,
meaning that task $\tau$ writes $\outvectorofdatavalues$ to its output channels at the future time $d$.
A {\it System Configuration} is a tuple
$
  (t, \State, \setofscheduledwrites, \setofreleasedtasks)
$,
where
\begin{itemize}
    \item $t \in \mathbb{N}_{\geq 0}$ is the current time. 
    \item $\State : \setofchannels \to \setofdatavalues^*$  is the state, mapping channels to their current contents\footnote{Register channels store a single data value, viewed as a length-one sequence in $\setofdatavalues^*$.}.
    \item  $\setofscheduledwrites$ is the set of write operations scheduled to commit at future time instants.
    \item $\setofreleasedtasks \subseteq \setoftasks$ is the set of tasks released at $t$ and ready to read their inputs.
\end{itemize}
We assume that the initial configuration of a system is 
$(0, \State_0, \emptyset, R_0)$,
where time starts at $0$, the initial contents of channels are given by $\State_0$, and $R_0 = \mathcal{T}$ (all tasks released at time $0$).

\myparagraph{Operational Semantics}
To formalize the transition rules, we define the precise 
meaning of read and write operations on channels.
The read operation $(\outvectorofdatavalues, \State') = \readop(\tau, \State)$ reads the input channels of $\tau$.
For each $c \in \inputchannels(\tau)$, if $c$ is a FIFO then $\outvectorofdatavalues(c)$ is the head token of $\State(c)$ and $\State'(c)$ is $\State(c)$ with its head removed.
If $c$ is a register then $\outvectorofdatavalues(c) = \State'(c) = \State(c)$.
Channels not in $\inputchannels(\tau)$ remain unchanged.
The write operation $\State' = \writeop(\tau, \State, \outvectorofdatavalues)$ updates the output channels of $\tau$.
For each $c \in \outputchannels(\tau)$, if $c$ is a FIFO then $\outvectorofdatavalues(c)$ is appended to the end of $\State(c)$.
If $c$ is a register then $\State(c)$ is overwritten with $\outvectorofdatavalues(c)$.
Channels not in $\outputchannels(\tau)$ remain unchanged.
%
%
We use
$\released(t) \triangleq \{\tau \in \setoftasks \mid \exists k \in \mathbb{N}.\; t = k \cdot \period(\tau)\}$
to stand for the set of tasks released at time $t$, and $\setofscheduledwrites(t) \triangleq \{ (\tau, d, \outvectorofdatavalues) \mid (\tau, d, \outvectorofdatavalues) \in \setofscheduledwrites \wedge d = t \}$
for  the set of write-operations scheduled for time $t$.
Furthermore, we define the next event time
$\mathsf{NextTime}(t) \triangleq \min\bigl( \{ k \cdot \period(\tau) \mid \tau \in \setoftasks,\, k \in \mathbb{N},\, k \cdot \period(\tau) > t \} \cup \{ k \cdot \period(\tau) + \deadline(\tau) \mid \tau \in \setoftasks,\, k \in \mathbb{N},\, k \cdot \period(\tau) + \deadline(\tau) > t \} \bigr)$,
which is the earliest future time instant at which either a task is released or a deadline occurs.
A task $\tau$ can compute if it has all inputs available, denoted $\enabled(\tau, \State)$, which holds iff
$\forall c \in \inputchannels(\tau).\; \State(c) \neq \varepsilon$.

\begin{semanticsbox}[title=Transition Rules]
  \footnotesize 
  \begin{mathpar}
    \inferrule*[lab=\textsc{Read}]{
      \setofscheduledwrites(t) = \emptyset \\ 
      \tau \in \setofreleasedtasks \\
      \enabled(\tau, \State) \\
      (\outvectorofdatavalues, \State') = \readop(\tau, \State)
    }{
      (t, \State, \setofscheduledwrites, \setofreleasedtasks) \longrightarrow
      (t, \State', \setofscheduledwrites \cup \{(\tau,\, t + \deadline(\tau),\, \taskfunction(\tau)(\outvectorofdatavalues))\}, \setofreleasedtasks \setminus \{\tau\})
    }
    
    \inferrule*[lab=\textsc{Write}]{
      (\tau, t, \outvectorofdatavalues) \in \setofscheduledwrites(t) \\
      \State' = \writeop(\tau, \State, \outvectorofdatavalues)
    }{
      (t, \State, \setofscheduledwrites,\setofreleasedtasks) \longrightarrow
      (t, \State', \setofscheduledwrites \setminus \{(\tau, t, \outvectorofdatavalues)\}, \setofreleasedtasks)
    }
    
    \inferrule*[lab=\textsc{Next}]{
      \setofscheduledwrites(t) = \emptyset \\
      \forall \tau \in \setofreleasedtasks.\; \lnot\enabled(\tau, \State) \\
      t' = \mathsf{NextTime}(t)
    }{
      (t, \State, \setofscheduledwrites, \setofreleasedtasks) \longrightarrow
      (t', \State, \setofscheduledwrites, \released(t'))
    }
  \end{mathpar}
\end{semanticsbox}

The operational semantics is defined by three transition rules:
(1)
\textsc{Read} fires at time $t$ when no write operations are scheduled at $t$, 
i.e., \ $\setofscheduledwrites(t) = \emptyset$, and there exists a released task $\tau \in \setofreleasedtasks$ enabled in the current state $\State$.
The inputs of $\tau$ will be read, assigned to $\outvectorofdatavalues$, and will be used to compute the outputs $\taskfunction(\tau)(\outvectorofdatavalues)$ to write at the deadline of $\tau$, i.e., $t + \deadline(\tau)$.
The precondition $\setofscheduledwrites(t) = \emptyset$ enforces the write-before-read rule, so all writes scheduled at $t$ complete before any read at that instant.
(2)
\textsc{Write} commits the write operations scheduled for time $t$ one by one.
(3)
\textsc{Next} advances time when no write operations are pending at $t$ and no released tasks in $\setofreleasedtasks$ are enabled.
Time advances to $t' = \mathsf{NextTime}(t)$, and $\setofreleasedtasks$ is reset to the set of tasks released at $t'$.

\begin{theorem}[Determinism]
Let $\gamma$ be a reachable configuration at time $t$.
Any two maximal sequences of \textsc{Write} and \textsc{Read}
transitions starting from $\gamma$, before applying \textsc{Next},
reach the same configuration.
\end{theorem}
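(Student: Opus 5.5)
The plan is a standard confluence argument: termination plus local confluence, then Newman's lemma. Within a fixed time $t$, let $\to_{RW}$ denote the \textsc{Read} and \textsc{Write} transitions. Note first that time is constant along $\to_{RW}$. For \emph{termination}, take the measure $|\setofscheduledwrites(t)| + |\setofreleasedtasks|$. \textsc{Write} removes one element of $\setofscheduledwrites(t)$. \textsc{Read} removes $\tau$ from $\setofreleasedtasks$, and the write it adds is scheduled at $t+\deadline(\tau) > t$ because $\deadline(\tau)>0$, so $\setofscheduledwrites(t)$ does not grow. Hence every sequence of \textsc{Write} and \textsc{Read} transitions from $\gamma$ is finite, and "maximal" sequences are well defined. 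By Newman's lemma it then suffices to prove \emph{local confluence}: if $\gamma_0 \to \gamma_1$ and $\gamma_0 \to \gamma_2$ by $\to_{RW}$, then $\gamma_1$ and $\gamma_2$ have a common $\to_{RW}$-descendant. Uniqueness of normal forms is exactly the claim, since a maximal sequence ends in a normal form.

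For local confluence I would distinguish three cases by rule type. \textbf{Write/Read:} this case cannot arise, since \textsc{Read} requires $\setofscheduledwrites(t)=\emptyset$ while \textsc{Write} requires $\setofscheduledwrites(t)\neq\emptyset$. This is where the write-before-read guard does the work.

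\textbf{Write/Write}, with distinct $(\tau,t,\vec v)$ and $(\tau',t,\vec v')$: I would first establish an invariant of reachable configurations. Each task has at most one pending write per time instant, since its releases are $\period(\tau)$ apart and each release yields at most one \textsc{Read}. In particular $\tau\neq\tau'$, so $\outputchannels(\tau)\cap\outputchannels(\tau')=\emptyset$. The two $\writeop$ operations touch disjoint channels and therefore commute. Each write also remains enabled after the other, because the two tuples are distinct set elements. Applying them in either order reaches the same configuration.

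\textbf{Read/Read} with $\tau\neq\tau'$ in $\setofreleasedtasks$: the two tasks have disjoint input channels, so $\readop(\tau,\cdot)$ and $\readop(\tau',\cdot)$ act on disjoint parts of $\State$. Enabledness of $\tau'$ depends only on $\State$ restricted to $\inputchannels(\tau')$, so it is preserved by $\tau$'s read, and vice versa. The values read are also unchanged. The resulting configurations coincide: the same $\State''$, the same $\setofscheduledwrites$ extended by both tuples, and $\setofreleasedtasks\setminus\{\tau,\tau'\}$. If $\tau=\tau'$, the two steps are identical, because $\readop$ and $\taskfunction(\tau)$ are functions.

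The main obstacle is the invariant used in the Write/Write case: that $\setofscheduledwrites(t)$ never contains two entries for the same task. I would prove it by induction over reachable configurations, using the fact that a task's writes are scheduled at $k\cdot\period(\tau)+\deadline(\tau)$ for distinct $k$. Even without this invariant, the same task would be writing to the same channels, and FIFO appends of different values need not commute. One fallback is to observe that two entries for one task at one instant would force the same $k$, which is impossible. A more careful alternative is needed if deadlines exceed periods, where the invariant still holds but requires care. A secondary subtlety is registers: a \textsc{Read} leaves a register unchanged, and registers are never empty, so they impose no constraint on enabledness. The disjointness argument therefore goes through unchanged for them.
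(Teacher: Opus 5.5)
Your proof is correct and is essentially the paper's argument made explicit: the paper says only ``by induction on the length of transitions,'' and your case analysis is exactly the one-step diamond property that such an induction needs (\textsc{Read} and \textsc{Write} are never simultaneously enabled at $t$; writes of distinct tasks commute; reads of distinct tasks commute; each pair joins in at most one step per side), so going through Newman's lemma is a harmless repackaging. Your worry about deadlines exceeding periods is unfounded: a pending write $(\tau,d,\cdot)$ can only come from a \textsc{Read} of $\tau$ at the instant $d-\deadline(\tau)$, and $\tau$ reads at most once per instant because \textsc{Read} removes it from $\setofreleasedtasks$ and \textsc{Next} strictly increases time, so your invariant holds however $\deadline(\tau)$ compares to the period.
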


\begin{proof}
By induction on the length of transitions.
\end{proof}


\section{Tool Architecture}
\label{sec:layer}
 
MIMOS-Tools decouples logical design from physical deployment through a layered architecture inspired by AUTOSAR ~\cite{autosar2024standard}.
Unlike AUTOSAR, which primarily separates application software from runtime and basic software services, MIMOS-Tools organizes its layers around the main engineering concerns in embedded system development: functional correctness, timing correctness, and platform deployment.

\begin{figure}[!htbp]
    \centering
    \includesvg[width=1.0\linewidth]{fig/design_flow.svg}
    \caption{The layered architecture of MIMOS-Tools and the 
    key features of its individual tools. Left-to-right arrows 
    indicate data flows through communication channels, while 
    dashed top-to-bottom arrows represent task mapping 
    (currently specified manually by users).}
    \Description{Layered architecture of MIMOS-Tools}
    \label{fig:layers}
\end{figure}

As illustrated in \autoref{fig:layers}, the top \textit{functional layer} represents systems as MIMOS networks. At this layer, designers specify task behavior and channel connections without considering schedulability or resource allocation. MIMOS-Tools supports graphical modeling, visual simulation, static checks for deadlock freedom and FIFO boundedness, and functional verification. Since this layer is based on the deterministic MIMOS semantics, functional properties validated here are preserved throughout the design flow. The middle \textit{software layer} refines verified MIMOS components into real-time tasks for scheduling and timing analysis. 
Thanks to non-blocking communication, tasks can be analyzed as independent real-time tasks, without modeling blocking delays.
This layer supports schedulability analysis under a given scheduling policy and platform configuration, as well as end-to-end latency analysis from source tasks to observable outputs. The bottom \textit{hardware layer} deploys validated models onto target platforms. The code generation module transforms MIMOS models into executable programs, while preserving the deterministic communication semantics established at the functional layer and respecting the timing constraints verified at the software layer.

This separation of concerns allows designers to reason about functional correctness independently of scheduling and deployment decisions. 
Consequently, as long as all tasks remain schedulable, changes to scheduling policies or platform configurations do not invalidate the functional behavior already validated at the model level, enabling efficient design-space exploration for embedded real-time systems.
\section{Tool Features and Case Studies}
\label{sec:features}

MIMOS-Tools is composed of a set of individual tools, supporting the different stages of the development process.
We illustrate the tool features using two case studies.
The first is a simplified UAV altitude controller with four tasks for command processing, control computation, sensing, and UAV dynamics.
The second is a perception-action pipeline for autonomous driving, adapted from the automotive object-detection application  in ~\cite{kang2022dnnsam}, with six tasks for camera acquisition, preprocessing, region-of-interest detection, full-scene detection, detection fusion, and engine control.
Its two detection tasks form a dual-path structure: a high-priority region-of-interest path for safety-critical detection and a lower-rate full-scene path for contextual information.

\myparagraph{Modeling and Simulation}
MIMOS-Tools provides a graphical modeling environment for constructing MIMOS models through a GUI, as shown in \autoref{fig:model_and_properties}.
Users build a system by creating tasks and channels in a drag-and-drop manner.
Each task can be configured through the properties panel with timing parameters, associated input/output channels, and channel types, either FIFOs or registers.

Although MIMOS was designed to overcome the lack of modularity and updatability of synchronous design, the current implementation of MIMOS-Tools supports Simulink models by setting the deadlines of tasks to zero to emulate the zero-time semantics of synchronous design when the models are checked (automatically by MIMOS-Tools) and found to contain no causality errors.

\begin{figure}[!htbp]
  \centering
  \includegraphics[width=0.85\linewidth]{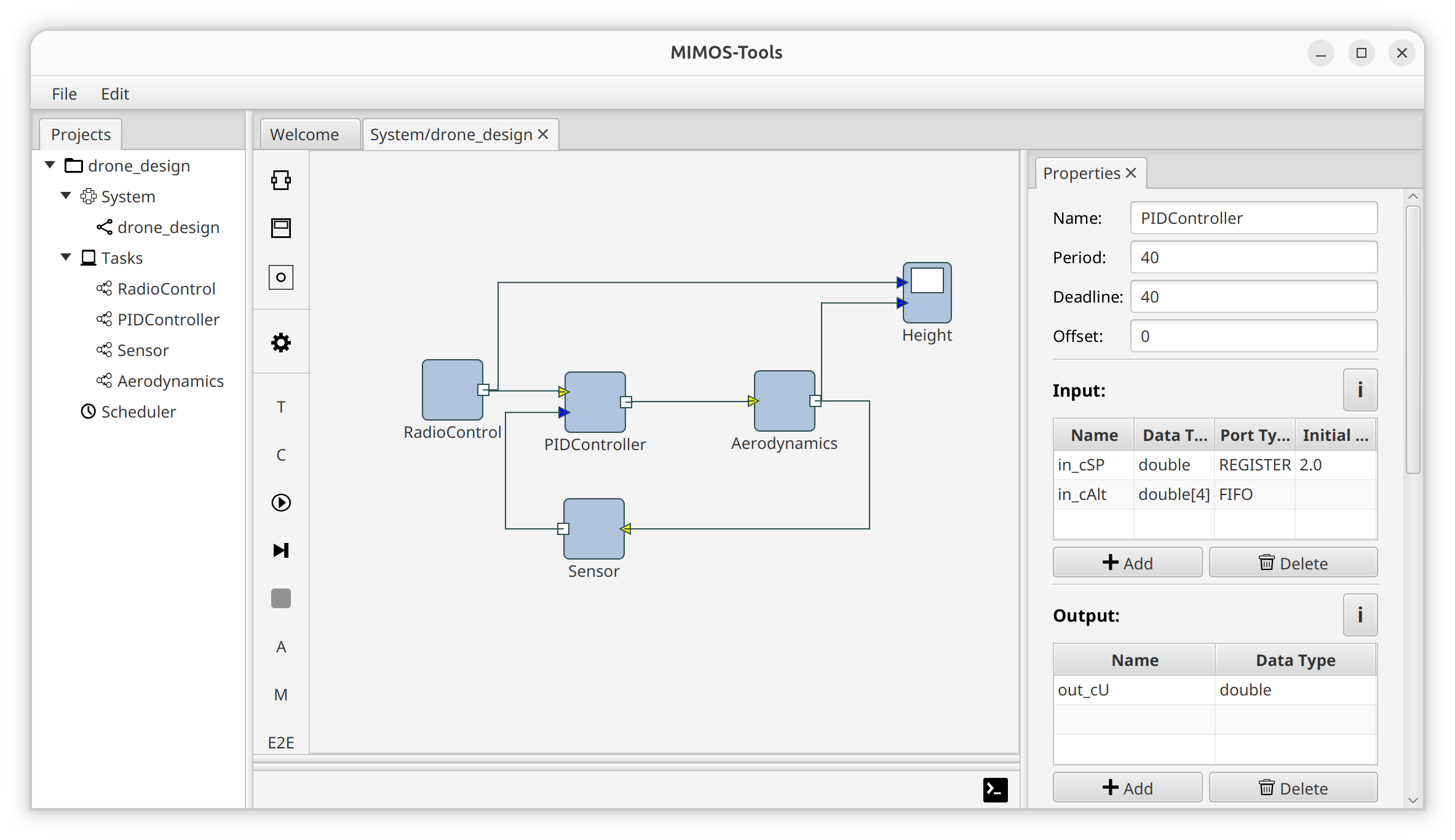}
  \caption{Graphical modeling interface with properties panel of the UAV
  altitude controller, consisting of four tasks connected by FIFO and register channels.}
  \Description{Screenshot of the graphical modeling interface showing the UAV altitude controller composed of four tasks connected by FIFO and register channels, with a properties panel visible.}
  \label{fig:model_and_properties}
\end{figure}

\begin{figure*}[!tb]
  \centering
  \begin{minipage}[t]{0.32\textwidth}
    \centering
    \includegraphics[width=\linewidth]{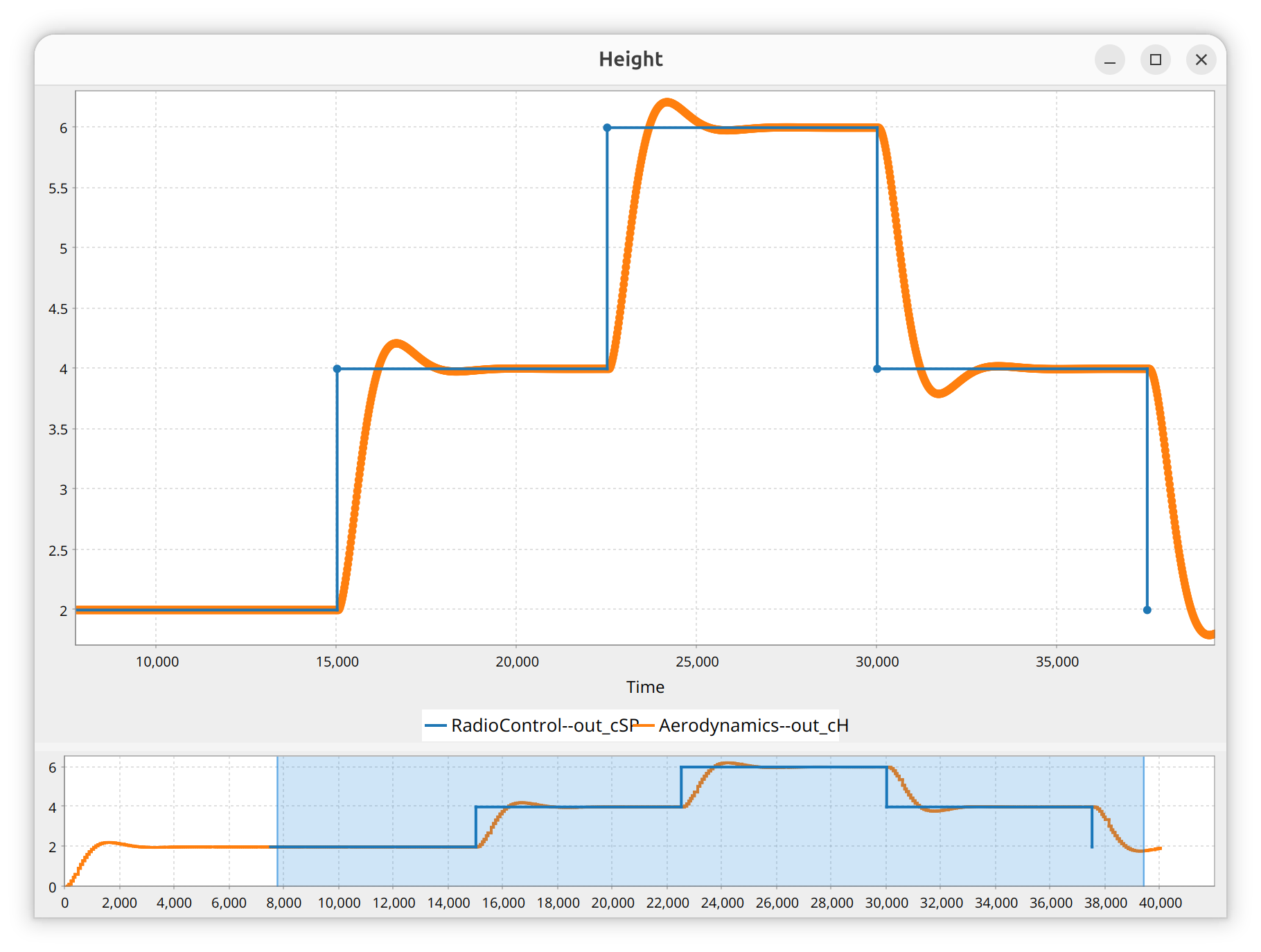}
    \vspace{-4pt}
    \centerline{(a) Simulation result}
  \end{minipage}
  \hfill
  \begin{minipage}[t]{0.32\textwidth}
    \centering
    \includegraphics[width=\linewidth]{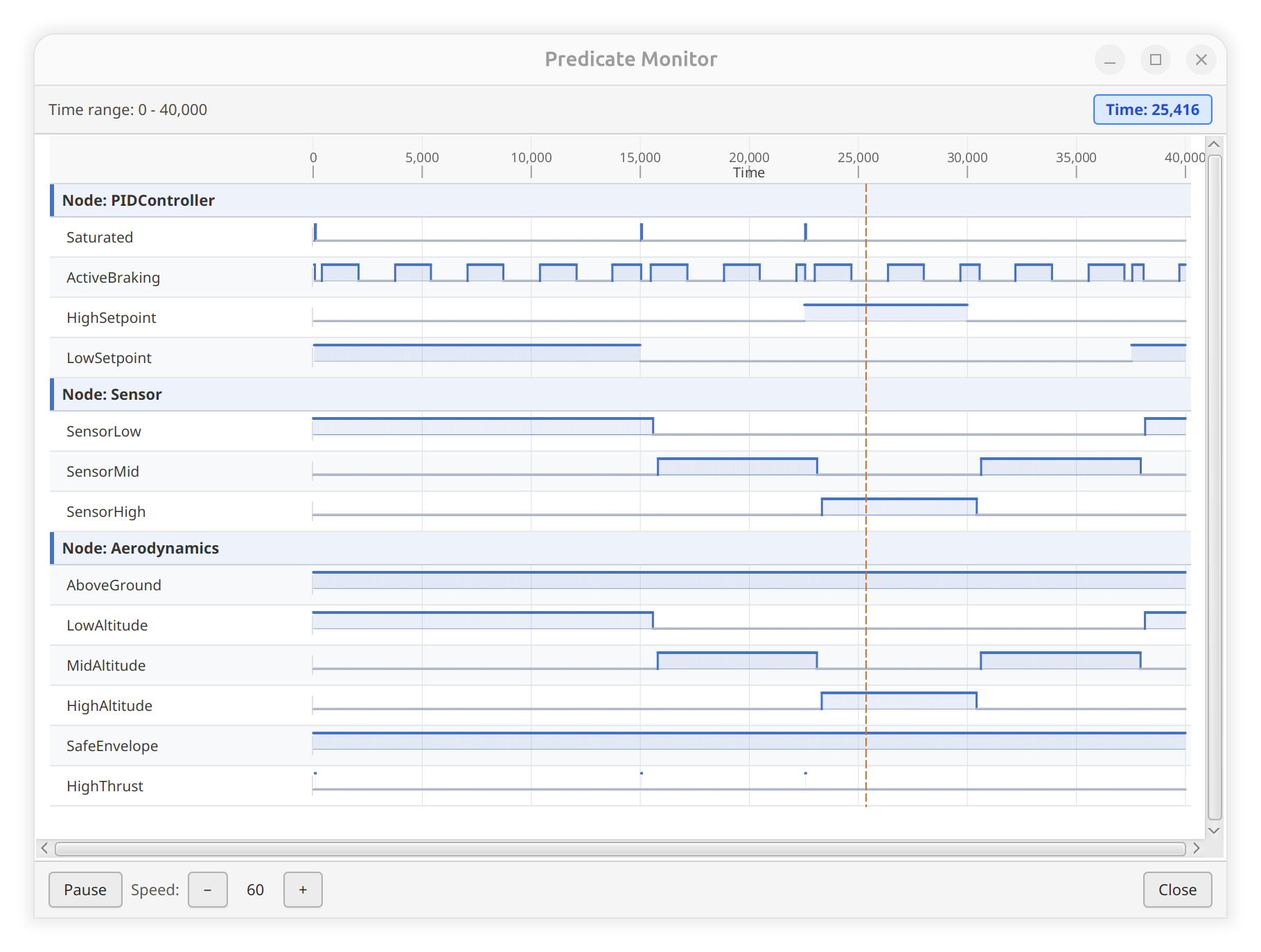}
    \vspace{-4pt}
    \centerline{(b) Predicate monitoring}
  \end{minipage}
  \hfill
  \begin{minipage}[t]{0.32\textwidth}
    \centering
    \includegraphics[width=\linewidth]{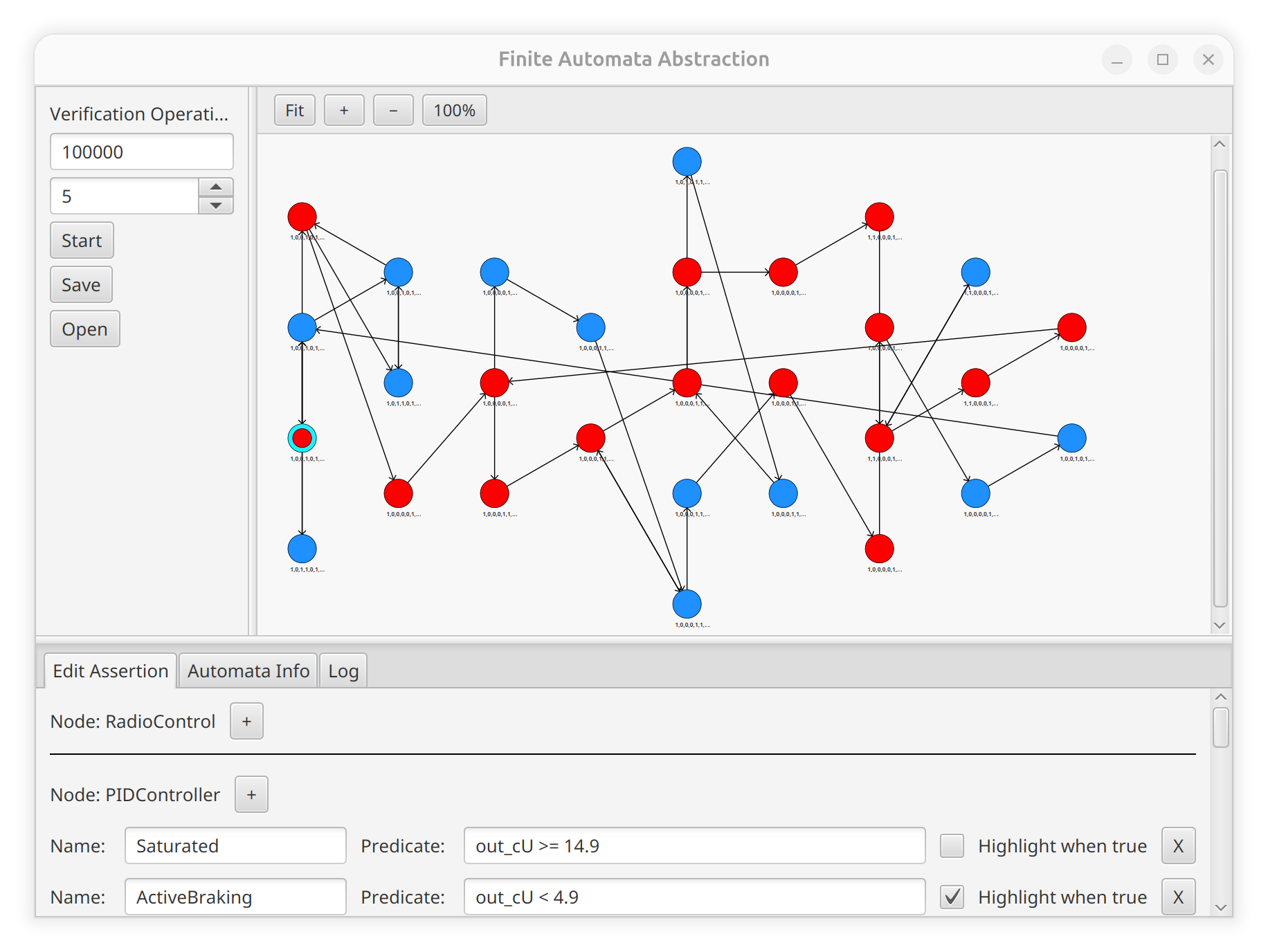}
    \vspace{-4pt}
    \centerline{(c) Finite-automaton abstraction}
  \end{minipage}

   \caption{Simulation, monitoring, and verification of the UAV altitude controller.
  \textbf{(a)}~Altitude tracking across four setpoints.
  \textbf{(b)}~Truth values of the predicates over time.
  \textbf{(c)}~Finite-automaton abstraction built from the predicates.}
  \Description{Three horizontally arranged screenshots from MIMOS-Tools. The first shows simulated altitude tracking for the UAV controller across four setpoints. The second shows predicate-monitoring results over time. The third shows a finite automaton abstraction used for verification.}
  \label{fig:sim_mon_verify}
\end{figure*}

\begin{figure*}[!t]
  \centering
  \includegraphics[height=0.115\textheight]{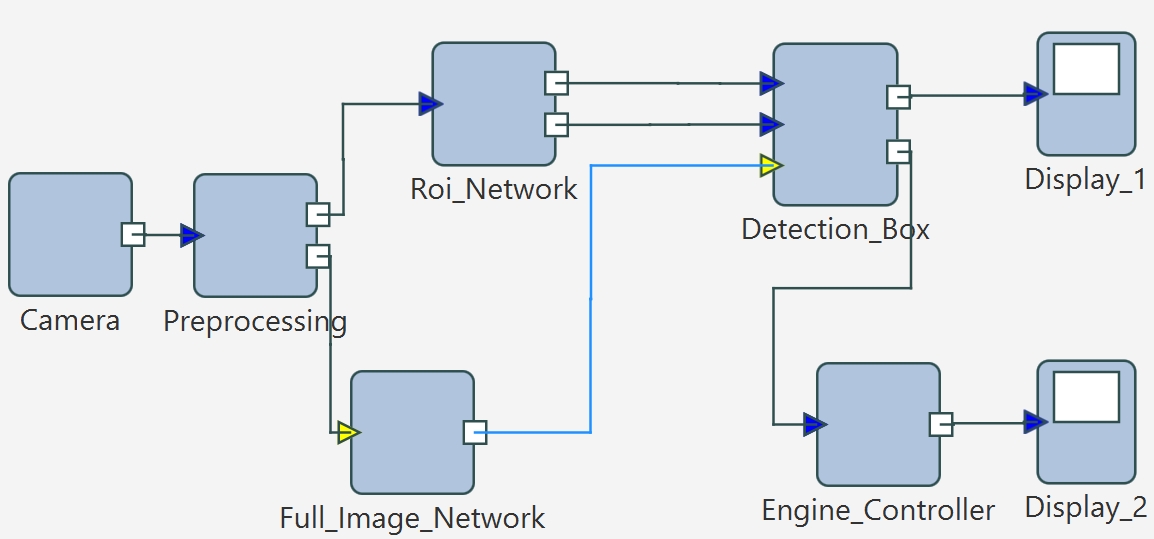}
  \hfill
  \includegraphics[height=0.115\textheight]{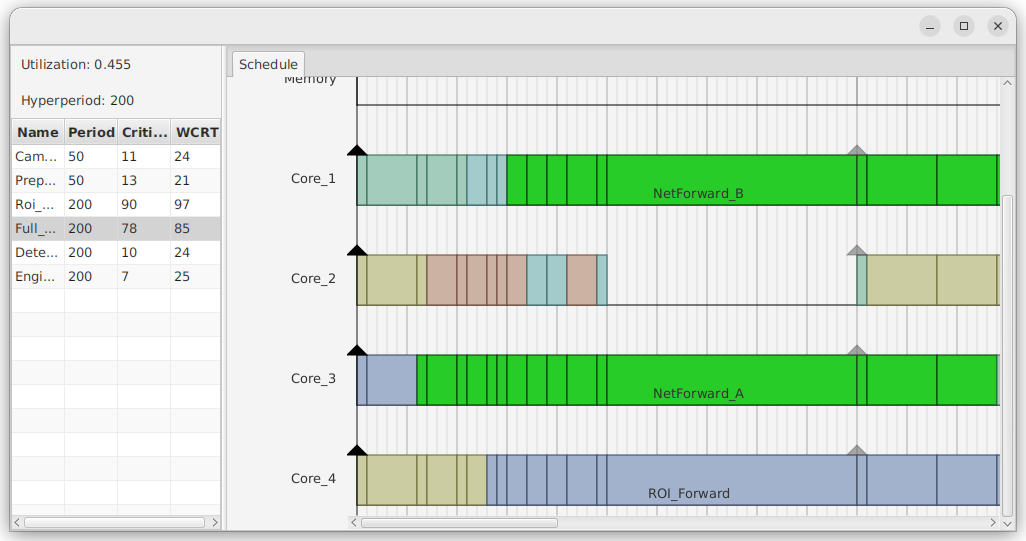}
  \hfill
  \includegraphics[height=0.115\textheight]{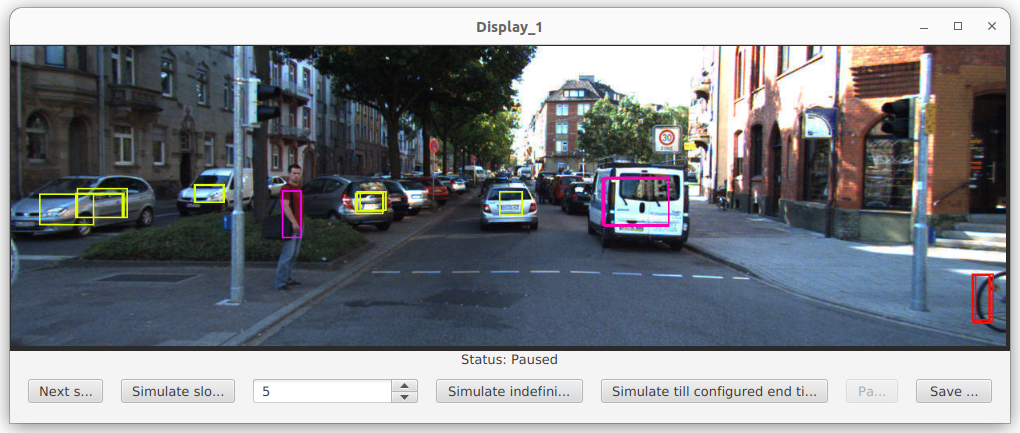}

  \vspace{-3pt}

  \begin{minipage}[t]{0.30\textwidth}\centering (a) MIMOS model\end{minipage}
  \hfill
  \begin{minipage}[t]{0.30\textwidth}\centering (b) Real-time scheduling and analysis\end{minipage}
  \hfill
  \begin{minipage}[t]{0.30\textwidth}\centering (c) Object detection in simulation\end{minipage}

  \caption{Real-time perception-action pipeline for autonomous driving:
  (a) MIMOS model, (b) Response-time analysis and Gantt chart for Global EDF on a four-core platform, and (c) Simulation of traffic scenarios and object detection.}
  \Description{Three screenshots showing the autonomous driving perception-action case study: the MIMOS model, the scheduling analysis result, and the object-detection simulation.}
  \label{fig:av_case_study}
\end{figure*}

\textit{Hierarchical modeling:}
To support large designs, MIMOS-Tools provides two approaches to hierarchical modeling.
Groups of tasks and channels can be collapsed into \textit{subsystems} to reduce visual complexity without changing the model semantics.
Individual tasks can also be defined as \textit{composite nodes}, whose functionality is represented as a DAG of sub-functions.
Composite nodes expose internal parallelism for scheduling and code generation, while still behaving as single MIMOS tasks following the read--compute--write pattern described in \autoref{sec:model}.

\textit{Mixed-language and LLM-assisted development:}
Task functions can be written in C++ or Java and mixed in one model for rapid prototyping.
MIMOS-Tools also supports \textit{natural-language specification} of task functions: users describe what a task should compute, and use an LLM to generate the corresponding implementation.

\textit{Simulation and visualization:}
MIMOS-Tools simulates models by executing the deterministic semantics step by step.
Users specify the simulation duration and visualize selected task outputs as time-series signals.
They can also define \textit{predicates} over model configurations and monitor their truth values during simulation.
\autoref{fig:sim_mon_verify}(a) and (b) show these results for the UAV controller.
\autoref{fig:av_case_study}(c)  shows the simulation of the perception-action pipeline with traffic scenarios and object detection.

\myparagraph{Real-Time Scheduling and Analysis}
The nodes of a MIMOS model represent real-time tasks characterized by a period, deadline, and WCET.
Composite nodes can be represented as DAG tasks to expose internal parallelism among sub-functions.
Since communication never blocks, existing schedulability analyses for independent tasks apply directly.
Users configure the analysis by selecting a scheduling policy and specifying the target platform configuration.
MIMOS-Tools checks the schedulability and computes the worst-case response times of the tasks.
Currently more than 20 multicore scheduling policies are supported. 
\autoref{fig:av_case_study}(b) shows the analysis for the perception-action pipeline on a four-core platform under Global EDF. Beyond response times, the tool also computes \textit{worst-case end-to-end latency} for user-specified source--sink paths through the MIMOS network.
Intuitively, the end-to-end latency of a path measures the maximum delay from a data change at a source task's input to its observable effect at a sink task's output, after propagation through all intermediate tasks and channels.

\myparagraph{Formal Verification}
Simulation generates only bounded execution trajectories but cannot prove properties over the full reachable state space.
MIMOS-Tools therefore provides verification support through static analysis and state-space exploration.
The static analysis checks properties including \textit{deadlock freedom} and \textit{FIFO boundedness} using only the model structure, timing parameters, and initial tokens.
When both checks succeed, the tool reports task utilization and peak FIFO sizes, which are used as buffer bounds during code generation.
For functional verification, users define local predicates over task inputs and outputs and specify global properties as LTL formulas over these predicates. The current implementation supports the verification of invariant, reachability, and leads-to properties. Given a MIMOS model, the verifier explores its reachable state space and constructs a finite-automaton abstraction on the fly, with each automaton state corresponding to a truth-value combination of the selected predicates. \autoref{fig:sim_mon_verify}(c) shows the automaton generated for the UAV controller from 13 local predicates, with highlighted states indicating those that satisfy a selected predicate. Global temporal properties can then be verified using standard model-checking algorithms.

\myparagraph{Code Generation}
\label{subsec:code_generation}
After passing the functional and timing checks, MIMOS-Tools generates executable code for multicore deployment.
The generated program combines user-defined task functions with a coordination layer implementing the MIMOS semantics and a real-time scheduler realizing the analyzed scheduling policy.
\section{Conclusion}
MIMOS-Tools is an integrated toolchain supporting the entire development lifecycle of embedded real-time systems, from design and implementation to dynamic updates after deployment. 
In this paper, we have presented the concise semantics of the MIMOS model together with the main features of the toolchain, as well as its modeling and analysis capabilities demonstrated through examples from avionics and automotive applications.

Compared with traditional approaches, such as  SCADE and Simulink, which are grounded in the synchronous paradigm and provide deterministic design, MIMOS-Tools also ensures deterministic design while enabling modular and incremental system development, facilitating efficient and reliable software updates throughout long-term system evolution.

\section*{Declarations}
\textbf{Funding Statement.}
This work was supported in part by the European Research Council (ERC) and the Knut and Alice Wallenberg Foundation (KAW) through projects ERC CUSTOMER under Grant 834166 and KAW UPDATE under Grant 20190134. 

\noindent\textbf{Data-Availability Statement.}
The artifact accompanying this paper is archived at \url{https://doi.org/10.5281/zenodo.21637714}.

\bibliographystyle{ACM-Reference-Format}
\bibliography{references}

\end{document}